\newcolumntype{C}[1]{>{\centering\arraybackslash$}p{#1}<{$}}
\xpatchcmd{\proof}{\itshape}{\prooflabelfont}{}{}
\newcommand{\prooflabelfont}{\bfseries}
\newcommand{\bs}{\boldsymbol{\mathrm{s}}}
\newcommand{\br}{\boldsymbol{\mathrm{r}}}
\newcommand{\be}{\boldsymbol{\mathrm{e}}}
\newcommand{\bu}{\boldsymbol{\mathrm{u}}}
\newcommand{\bt}{\boldsymbol{\mathrm{t}}}
\newcommand{\bA}{\boldsymbol{\mathrm{A}}}
\newcommand{\buone}{\bA^{T}\br+\be_{1}}
\newcommand{\coloneqq}{:=}
\DeclareMathOperator{\var}{Var}
\DeclareMathOperator{\diag}{diag}
\newtheorem{theo}{Theorem}
\newtheorem{theorem}{Theorem}
\newtheorem{definition}[theorem]{Definition}
\newtheorem{lemma}{Lemma}
\newtheorem{remark}{Remark}
\newtheorem{example}{Example}
\begin{document}

\title[Article Title]{Lattice Codes for CRYSTALS-Kyber}


\author*[1]{\fnm{Shuiyin} \sur{Liu}}\email{SLiu@Holmes.edu.au}

\author[2]{\fnm{Amin} \sur{Sakzad}}\email{Amin.Sakzad@monash.edu}


\affil*[1]{\orgdiv{Cyber Security Research and Innovation Centre}, \orgname{Holmes
Institute}, \orgaddress{\city{Melbourne}, \postcode{VIC 3000}, \state{Victoria}, \country{Australia}}}

\affil[2]{\orgdiv{Department of Software Systems $\&$ Cybersecurity}, \orgname{Monash University}, \orgaddress{\city{Melbourne}, \postcode{VIC 3800}, \state{Victoria}, \country{Australia}}}



\abstract{This paper describes a constant-time lattice encoder for the National Institute of Standards and Technology (NIST) recommended post-quantum encryption algorithm: Kyber. The first main contribution of this paper is to refine the analysis of Kyber decoding noise and prove that Kyber decoding noise can be bounded by a sphere. This result shows that the Kyber encoding problem is essentially a sphere packing in a hypercube. The original Kyber encoder uses the integer lattice for sphere packing purposes, which is far from optimal. Our second main contribution is to construct optimal lattice codes to ensure denser packing and a lower decryption failure rate (DFR).  Given the same ciphertext size as the original Kyber, the proposed lattice encoder enjoys a larger decoding radius, and is able to encode much more information bits. This way we achieve a decrease of
the communication cost by up to $32.6\%$, and a reduction of the DFR by a factor of up to $2^{85}$. Given the same plaintext size as the original Kyber, e.g., $256$ bits, we propose a bit-interleaved coded modulation (BICM) approach, which combines a BCH code and the proposed lattice encoder. The proposed BICM scheme significantly reduces the DFR of Kyber, thus enabling further compression of 
the ciphertext. Compared with the original Kyber encoder, the communication cost is reduced by $24.49\%$, while the DFR is decreased by a factor of $2^{39}$. The proposed encoding scheme is a constant-time algorithm, thus resistant against the timing side-channel attacks.}

\keywords{Lattices, Encoding, Lattice-based cryptography, Post-quantum cryptography, Module learning with errors }



\maketitle

\section{Introduction}\label{sec1}

The currently deployed Internet key exchange protocols use RSA and Elliptic Curve-based public-key cryptography (ECC), which have been proven to be vulnerable to quantum computing attacks. In July 2022, the National Institute of Standards and Technology (NIST) announced that a new post-quantum algorithm, Kyber, will replace current RSA and ECC-based key-establishment algorithms \cite{NISTpqc2022}. The NIST draft standard for Kyber was released in August 2023 \cite{NISTpqcdraft2023}. However, the quantum resistance of Kyber is achieved at a high communication cost, which refers to the ciphertext expansion rate (CER), i.e., the ratio of the ciphertext size to the plaintext size. Kyber introduces a $24-49$ times more communication cost compared to ECC-based algorithms, which limits its practical application for resource-constrained devices.

Kyber uses the module learning with errors problem (M-LWE) as its underlying mathematical problem \cite{Kyber2018}. Different from RSA and ECC, LWE-based encryption schemes, e.g., Kyber \cite{Kyber2018}, Saber \cite{Sabar2018}, and FrodoKEM \cite{FrodoKEM2021},  have a small decryption failure rate (DFR), where the involved parties fail to derive a shared secret key. Meanwhile, a high DFR might allow an adversary to recover the secret key using a number of failed ciphertexts \cite{DFRAttack2019}. For Kyber,  an attacker can use Grover search to find the ciphertexts with a slightly higher chance of producing a failure \cite{Kyber2021}. For Saber, D’Anvers and Batsleer have shown that the quantum security can be reduced from $172$ bits to $145$ bits by using a multitarget decryption failure attack \cite{DFRattack2022}. Therefore, LWE-based encryption schemes commonly choose their parameters so that
their DFR is small enough to avoid decryption failure attacks. The downside of these settings is an increased CER and a higher computational complexity.

Given a sufficiently small DFR, coding technologies have the potential to significantly reduce the CER of LWE-based encryption schemes. Some recent research models the decryption decoding problem in LWE-based encryption schemes as the decoding problem on additive white Gaussian noise (AWGN) channels \cite{NewhopeECC2018}\cite{Kyberpolar2022}\cite{FrodoCong2022}. Coding theory is then applied to improve the import performance parameters of LWE-based approaches, such as security level, DFR, and CER. In \cite{NewhopeECC2018}, a concatenation of BCH and LDPC codes was introduced to NewHope, to reduce the CER by a factor of $12.8\%$.  In \cite{Kyberpolar2022}, a rate-$1/2$ polar code was applied to Kyber, to reduce the DFR. In \cite{FrodoCong2022}, Barnes–Wall lattice codes were used in FrodoKEM, to reduce the CER by a factor of up to $6.7\%$. However, the decoding noise in Kyber contains a large uniform component: Kyber uses a compression function to round the $12$-bit ciphertext coefficients to $4$-bit (KYBER512/768) or $5$-bit (KYBER1024) integers. When decompressed, a $7$ or $8$-bit (almost) uniform noise will be added to the ciphertext \cite{Kyber2021}. Coding approaches for the AWGN channel may not be directly applicable to Kyber.

Despite its advantages, coding technologies add an extra decoding step to the LWE-based encryption schemes, which might be vulnerable to side-channel attacks.  A side-channel attack aims to gather information through physical channels such timing, power consumption, and electromagnetic emissions. In \cite{ECCTimingAttack2019}, the authors proposed a timing side-channel attack for BCH code used in the Ring-LWE scheme LAC. The proposed attack can distinguish between valid ciphertexts and failing ciphertexts, by using the fact that the original BCH decoder decodes valid codewords faster than the codewords that contain errors. However, this attack can be thwarted using a constant-time decoder. For BCH codes, a constant-time decoder was proposed in \cite{constantBCH2020}. For lattice codes, constant-time decoders have been proposed for Barnes–Wall lattice in \cite{FrodoCong2022} and Leech lattice in \cite{constanttimeLeech2016}. By eliminating the non-constant time execution in the decoding process, BCH codes, Barnes–Wall lattice codes, and Leech lattice codes are resistant against the timing attacks.

From an implementation perspective, most research focuses on reducing the execution time of Kyber. In \cite{KyberGPU2021}, the authors proposed a software implementation of Kyber using
a NVIDIA QUADRO GV100 graphics card, to increase the throughput for key exchange by $51$ times. In \cite{KyberHard2020}, the authors presented a pure hardware implementation of Kyber using Xilinx FPGAs, which achieves a maximum speedup of $129$ times for key exchange. In \cite{KyberSmartMeter2022}, the authtors proposed a hardware/software co-design implementation of Kyber in hardware-constrained smart meters. However, these implementations did not address the communication energy cost of Kyber. Since most Internet of Things (IoT) devices are battery-powered wireless sensors, energy efficiency is a crucial requirement for these devices. Kyber introduces a $24-49$ times more communication energy cost compared to current ECC-based algorithms. If Kyber is adopted for the
IoT devices, their battery life may be substantially reduced. There is clearly an urgent need to reduce the communication energy cost (i.e., CER) of Kyber.

In this work, we will demonstrate how to reduce the CER and DFR of Kyber. The main contribution of this paper is twofold: first, we show that the Kyber encoding problem is equivalent to the sphere packing in a hypercube. An explicit tail bound on the decoding noise magnitude is derived. This result allows us to upper bound the Kyber decoding noise as a hypersphere. Second, we propose a lattice-based encoder for Kyber, to ensure the densest packing of the noise sphere. Note that the original Kyber encoder uses a scaled integer lattice, which is far from optimal. Our construction is based on Barnes–Wall lattice and Leech lattice since they are denser than the integer lattice and have constant-time optimal decoders. For a fixed ciphertext size compared to Kyber, the proposed lattice encoder reduces the CER by up to $32.6\%$, and improves the DFR by a factor of up to $2^{85}$. For a fixed plaintext size compared to Kyber, e.g., $256$ bits, we propose a bit-interleaved coded modulation (BICM) approach, which combines a BCH code and a lattice encoder. Compared with the original Kyber encoder, the CER is reduced by $24.49\%$, while the DFR is decreased by a factor of $2^{39}$. The proposed encoding scheme is a constant-time algorithm, thus resistant against timing attacks.

\section{Preliminaries}
\subsection{Notation}
\emph{Rings:} Let $R$ and $R_{q}$ denote the rings $\mathbb{Z}[X]/(X^{n}+1)$ and $\mathbb{Z}_{q}[X]/(X^{n}+1)$, respectively. The degree $n$ of the monic polynomial is fixed to $256$ in Kyber.
Matrices and vectors are represented as bold upper-case and lower-case letters, respectively.
We use $\boldsymbol{\rm v}^T$ to represent the transpose of $\boldsymbol{\rm v}$.  We use the infinity norm $\|\boldsymbol{\rm v}\|_{\infty}$ to represent the magnitude of its largest entry.

\emph{Sampling and Distribution:} For a set $\mathcal{S}$, we write $s \leftarrow \mathcal{S}$ to denote that $s$
is chosen uniformly at random from $\mathcal{S}$. If $\mathcal{S}$ is a probability
distribution, then this denotes that $s$ is chosen according to
the distribution $\mathcal{S}$. For a polynomial $f(x) \in R_q$ or a vector of such polynomials, this notation is defined coefficient-wise. We use $\var(\mathcal{S})$ to represent the variance of the distribution $\mathcal{S}$. Let $x$ be a bit string and $S$ be a distribution taking $x$ as the input, then $y\sim S\coloneqq\mathsf{Sam}\left(x\right)$ represents that the output $y$ generated by distribution $S$ and input $x$ can be extended to any desired length. We denote $\beta_{\eta}=B(2\eta
,0.5)-\eta $ as the central binomial distribution over $\mathbb{Z}$. We denote $\mathcal{N}(\mu,\sigma^2)$ as the continuous normal distribution over $\mathbb{R}$, with mean $\mu$ and variance $\sigma^2$. We denote $\mathcal{U}(a, b)$ as the discrete uniform distribution over $\mathbb{Z}$, with minimum $a \in \mathbb{Z}$ and maximum $b \in \mathbb{Z}$.

\emph{Compress and Decompress:} Let $x\in\mathbb{R}$ be a real number, then $\left\lceil x\right\rfloor $ means rounding to the closet integer with ties rounded up. Let $x \in \mathbb{Z}_{q}$ and $d \in \mathbb{Z}$ be such that $2^d<q$. We define
\begin{align}
\mathsf{Compress}_{q}(x,d)&=\lceil (2^{d}/q)\cdot x\rfloor \bmod 2^{d},\nonumber\\
\mathsf{Decompress}_{q}(x,d)&=\lceil (q/2^{d})\cdot x\rfloor.\label{ComDecom}
\end{align}

\subsection{Kyber: Key Generation, Encryption, and Decryption}
Let $k, d_u, d_v$ be positive integer parameters, which are listed in Table \ref{Kyber_Par}. Let $\mathcal{M}_{2,n} = \{0,1\}^{n}$ denote the
message space, where every message $m \in \mathcal{M}_{2,n}$ can be
viewed as a polynomial in $R$ with coefficients in $\{0,1\}$.
Consider the public-key encryption scheme Kyber.CPA =
(KeyGen; Enc; Dec) as described in Algorithms 1 to 3 \cite{Kyber2021}. 
\vspace{-3mm}
\begin{algorithm}[H]
\caption{$\mathsf{Kyber.CPA.KeyGen()}$: key generation}
\label{alg:kyber_keygen}
\begin{algorithmic}[1]

    \State
    $\rho,\sigma\leftarrow\left\{ 0,1\right\} ^{256}$

    \State
    $\bA\sim R_{q}^{k\times k}\coloneqq\mathsf{Sam}(\rho)$

    \State
    $(\bs,\be)\sim\beta_{\eta_1}^{k}\times\beta_{\eta_1}^{k}\coloneqq\mathsf{Sam}(\sigma)$

    \State
    $\bt\coloneqq\boldsymbol{\mathrm{As+e}}$\label{line:t}

    \State \Return $\left(pk\coloneqq(\boldsymbol{\mathrm{t}},\rho),sk\coloneqq\bs\right)$  

\end{algorithmic}
\end{algorithm}

\vspace{-4mm}

\begin{algorithm}[H]
\caption{$\mathsf{Kyber.CPA.Enc}$ $(pk=(\boldsymbol{\mathrm{t}},\rho),m\in\mathcal{M}_{2,n})$}
\label{alg:kyber_enc}
\begin{algorithmic}[1]

	\State
	$r \leftarrow \{0,1\}^{256}$

	\State
	$\boldsymbol{\mathrm{A}}\sim R_{q}^{k\times k}\coloneqq\mathsf{Sam}(\rho)$
	
	\State  $(\boldsymbol{\mathrm{r}},\boldsymbol{\mathrm{e}_{1}},e_{2})\sim\beta_{\eta_1}^{k}\times\beta_{\eta_2}^{k}\times\beta_{\eta_2}\coloneqq\mathsf{Sam}(r)$
	
	\State  $\boldsymbol{\mathrm{u}}\coloneqq\mathsf{Compress}_{q}(\buone,d_{u})$\label{line:u}
	
	\State  $v\coloneqq\mathsf{Compress}_{q}(\boldsymbol{\mathrm{t}}^{T}\boldsymbol{\mathrm{r}}+e_2+\left\lceil {q}/{2}\right\rfloor \cdot m,d_{v})$\label{line:v}
	
	\State \Return $c\coloneqq(\boldsymbol{\mathrm{u}},v)$

\end{algorithmic}
\end{algorithm}

\vspace{-4mm}

\begin{algorithm}[H]
\caption{${\mathsf{Kyber.CPA.Dec}}\ensuremath{(sk=\bs,c=(\bu,v))}$}
\begin{algorithmic}[1]

    \State
    $\bu\coloneqq\mathsf{Decompress}_{q}(\bu,d_{u})$

    \State
    $v\coloneqq\mathsf{Decompress}_{q}(v,d_{v})$

    \State \Return $\mathsf{Compress}_{q}(v-\bs^{T}\bu,1)$

\end{algorithmic}
\end{algorithm}

\vspace{-4mm}
\begin{table}[ht]
\caption{Kyber Parameters in \cite{NISTpqcdraft2023}\cite{Kyber2021}: plaintext size $=256$ bits}
\label{Kyber_Par}\centering
\begin{tabular}{|c|c|c|c|c|c|c|c|c|c|c|}
\hline
& $n$ & $k$ & $q$ & $\eta_{1}$ & $\eta_{2}$ & $d_{u}$ & $d_{v}$ & $\delta$ & CER\\ \hline
KYBER512 & $256$ & $2$ & $3329$ & $3$ & $2$ & $10$ & $4$ & $2^{-139}$ & $24$\\ \hline
KYBER768 & $256$ & $3$ & $3329$ & $2$ & $2$ & $10$ & $4$ & $2^{-164}$ &$34$\\ \hline
KYBER1024 & $256$ & $4$ & $3329$ & $2$ & $2$ & $11$ & $5$ & $2^{-174}$ &$49$ \\ \hline
\end{tabular}
\end{table}

\subsection{Kyber Decoding Noise} 
Let $n_{e}$ be the decoding noise in
Kyber. According to \cite{Kyber2021}\cite{Kyberpolar2022}, we can write $n_{e}$ as%
\begin{align}
n_{e}&=v-\mathbf{s}^{T}\mathbf{u} -\left\lceil q/2\right\rfloor
\cdot m  \notag \\ 
&= \mathbf{e}^{T}\mathbf{r}+e_{2}+c_{v}-\mathbf{s}^{T}\left( \mathbf{e}%
_{1}+\mathbf{c}_{u}\right),  \label{Ne}
\end{align}
where $c_{v}\leftarrow \psi _{d_{v}}$, $\mathbf{c}_{u}\leftarrow \psi
_{d_{u}}^{k}$ are rounding noises generated due to the compression operation.
The elements in $c_v$ or $\mathbf{c}_u$ are assumed to be i.i.d. and independent of other terms in (\ref{Ne}). For a small $d$, e.g., $d=d_v$, we can assume that $\psi _{d}$ follows a discrete uniform distribution \cite{Kyber2018}:
\begin{equation}
c_{v}\leftarrow \psi _{d_v}\approx \mathcal{U}(-\lceil q/2^{d_v+1}\rfloor ,\lceil q/2^{d_v+1}\rfloor). \label{U_assumption}
\end{equation}

Kyber decoding problem can thus be formulated as%
\begin{equation}
y=v-\mathbf{s}^{T}\mathbf{u}=\left\lceil q/2\right\rfloor \cdot m+n_{e} \text{,}  \label{decoding_mode}
\end{equation}
i.e., given the observation $y\in R_{q}$, recover the value of $m$.

From a coding perspective, an interesting question is if we can model (\ref{decoding_mode}) as an AWGN channel. Given the values of $q$ and $d_v$ in Table \ref{Kyber_Par}, we observe that $n_e$ contains a uniform component $c_v$ with large variance:  $3640$ for KYBER512/768, and $918.67$ for KYBER1024. Therefore,  $n_{e}$ may not follow a normal distribution and hence \eqref{decoding_mode} may not be treated as an AWGN channel.

In Figure \ref{K_ND_Plot}, we compare the distribution of first coefficient in $n_{e}$, denoted as $n_{e,1}$, with the normal distribution using MATLAB function
$\text{normplot}()$. If the sample data has a normal distribution, then the data points appear along the reference line. With $100,000$ samples, we observe that the distribution of $n_{e,1}$ is different from the normal distribution. This result confirms that the AWGN assumption (\ref{decoding_mode}) of \cite{Kyberpolar2022} is not valid.

\begin{figure}[tbp]
\centering
\includegraphics[width=0.8\textwidth]{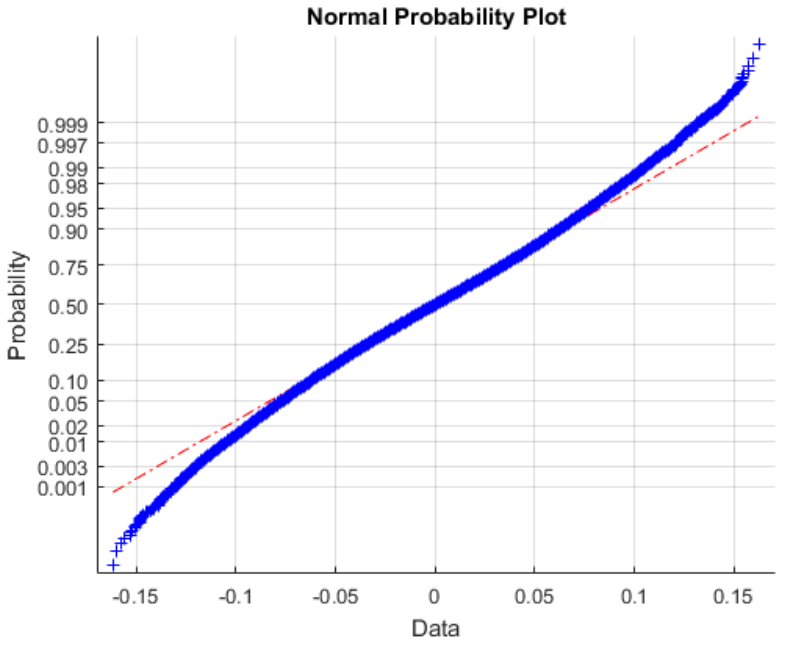} \vspace{0mm} 
\caption{KYBER{\protect\small 768: Comparing the distribution of $n_{e,1}$
to the normal distribution with $100,000$ samples.}}
\label{K_ND_Plot}
\end{figure}

\subsection{Decryption Failure Rate and Ciphertext Expansion Rate}
We let $\text{DFR} \triangleq \Pr (\hat{m} \neq m)$, where $\hat{m}$ is the decoded message. Since the original Kyber encodes and decodes $m$ bit-by-bit, DFR can be calculated by \cite{Kyber2018}:
\begin{equation}
\text{DFR} =  \delta :=\Pr (\|n_e\|_{\infty} \geq \lceil q/4 \rfloor). \label{KyberDFR}
\end{equation}
Note that it is desirable to have a small $\delta$, e.g., $\delta < 2^{-128}$,  in order to be safe against decryption failure attacks \cite{DFRAttack2019}. 

In this work, the communication cost refers to the ciphertext expansion rate (CER),
\begin{equation}
\text{CER} =  \frac{\# \text{ of bits in } c}{\# \text{ of bits in } m}, \label{CER}
\end{equation}
i.e., the ratio of the ciphertext size to the plaintext size. The values of $\delta$ and CER are given in Table \ref{Kyber_Par}.
For comparison purposes, we define the CER reduction ratio as
\begin{equation}
\text{CER-R}\triangleq1-\frac{\text{CER}_{\text{reduced}}}{\text{CER}}, \label{CER_R}
\end{equation}
where $\text{CER}$ represents the original CER of Kyber in Table \ref{Kyber_Par}, and  $\text{CER}_{\text{reduced}}$ represents the the reduced CER from this work.

\section{The Analysis on Kyber Decoding Noise } 
In this section, we study the distribution of $n_{e}$ in (\ref{Ne}). We will show that $n_{e}$ can be bounded by a hypersphere.
\vspace{-1mm}
\subsection{The Distribution of \texorpdfstring{$n_{e}$}{TEXT}}
The following lemma considers the product of two central binomial distributed polynomials. It is used to prove Theorem 1, the main result of this section.
\begin{lemma}
Suppose that $Z\leftarrow \beta_{\eta }$ and $Z^{\prime }\leftarrow \beta_{\eta
^{\prime }}$ are independent, then the polynomial product $ZZ^{\prime }$ (modulo $X^{n}+1
$) asymptotically approaches a multivariate normal distribution for large $n$, i.e.,%
\begin{equation*}
ZZ^{\prime }\leftarrow \mathcal{N}(0,n/4\cdot \eta\eta ^{\prime}I_{n}).
\end{equation*}
\end{lemma}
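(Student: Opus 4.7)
The plan is to expand the product in the ring $R=\mathbb{Z}[X]/(X^{n}+1)$ coefficient-wise, compute the first two moments directly, and then promote the resulting covariance structure to joint asymptotic normality via a conditional central limit theorem.

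First I would write out the $k$-th coefficient of the negacyclic convolution as
\begin{equation*}
(ZZ')_{k}=\sum_{i=0}^{n-1}\varepsilon_{i,k}\,Z_{i}\,Z'_{\pi_{k}(i)},
\end{equation*}
where $\varepsilon_{i,k}\in\{\pm 1\}$ records the sign flip coming from $X^{n}=-1$ and $\pi_{k}$ is a fixed permutation of $\{0,\ldots,n-1\}$. In particular every $Z_{i}$ and every $Z'_{j}$ appears exactly once in this sum, with a sign. I would then recall that $\beta_{\eta}$ has mean $0$ and variance $\eta/2$, so that $E[Z_{i}Z'_{j}]=0$ and $\var(Z_{i}Z'_{j})=\eta\eta'/4$ by independence of $Z$ and $Z'$. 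Summing the $n$ uncorrelated contributions immediately gives $E[(ZZ')_{k}]=0$ and $\var((ZZ')_{k})=n\eta\eta'/4$.

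Second, I would verify that the coefficient vector has a diagonal covariance. For $k\neq l$, a cross term $E[Z_{i}Z'_{j}Z_{i'}Z'_{j'}]$ is nonzero only when $i=i'$ and $j=j'$ (because the $Z_{i}$'s and $Z'_{j}$'s are mutually independent centered variables), but then $i+j\equiv k$ and $i'+j'\equiv l$ mod $n$ would force $k=l$, a contradiction. Hence every cross term vanishes and $\operatorname{Cov}((ZZ')_{k},(ZZ')_{l})=0$, matching the target covariance matrix $\frac{n\eta\eta'}{4}I_{n}$.

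Third, for the distributional claim I would condition on $Z=z$. Then $(ZZ')_{k}\mid Z=z$ is a weighted sum of i.i.d.\ bounded variables $Z'_{j}$ with weights $\pm z_{i}$, and its variance is $\|z\|^{2}\eta'/2$. Because $\beta_{\eta'}$ is bounded, the Lindeberg--Feller condition is satisfied once $\|z\|^{2}$ grows, so the conditional law converges to $\mathcal{N}(0,\|z\|^{2}\eta'/2)$. Applying the strong law of large numbers to $\|Z\|^{2}/n\to \eta/2$ and combining with Slutsky's theorem yields $(ZZ')_{k}\Rightarrow \mathcal{N}(0,n\eta\eta'/4)$ in the large-$n$ regime. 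Joint normality of any fixed collection of coordinates follows by the Cram\'er--Wold device: the same conditional CLT applies to $\sum_{k}\lambda_{k}(ZZ')_{k}$, and the diagonal covariance from step two pins down the limiting variance.

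The main obstacle is the fact that the ambient dimension $n$ is itself the parameter driving the approximation, so one has to interpret ``approaches $\mathcal{N}(0,n\eta\eta'/4\,I_{n})$'' carefully; the cleanest reading is that fixed-dimensional marginals, after matching with the stated covariance, converge to the corresponding Gaussian marginals. The Lindeberg check combined with the LLN on $\|Z\|^{2}$ makes this uniform enough to justify the Gaussian approximation in the finite-$n$ form used later when bounding $n_{e}$ by a sphere.
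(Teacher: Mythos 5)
Your proposal is correct and shares the paper's overall skeleton (expand the negacyclic convolution coefficient-wise, compute mean, variance, and cross-covariance, then invoke a CLT), but two of your steps take genuinely different routes. For the vanishing of $\operatorname{Cov}((ZZ')_{k},(ZZ')_{l})$, you observe that $E[Z_{i}Z'_{j}Z_{i'}Z'_{j'}]$ factors as $E[Z_{i}Z_{i'}]\,E[Z'_{j}Z'_{j'}]$ and is nonzero only when $i=i'$ and $j=j'$, which is incompatible with $k\neq l$; the paper instead splits into the two cases $(j=j',\,i-j\neq i'-j')$ and $(j\neq j',\,i-j=i'-j')$ and shows each contribution is a sum of products of first moments, hence zero. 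Your argument is cleaner and reaches the same conclusion. For the distributional claim, the paper notes that for fixed $k$ the $n$ summands $\xi(i-j)Z_{i-j}Z'_{j}$ involve pairwise disjoint coordinates of $Z$ and $Z'$, so they are i.i.d.\ and the plain CLT applies directly; you instead condition on $Z=z$, apply Lindeberg--Feller to the weighted sum of the bounded $Z'_{j}$, and recover the unconditional limit via the law of large numbers for $\|Z\|^{2}/n$ and Slutsky. Your route is valid but the conditioning detour is unnecessary for a single coordinate, and it costs you something at the Cram\'er--Wold step: the conditional variance of $\sum_{k}\lambda_{k}(ZZ')_{k}$ involves cross terms of the form $\sum_{i}Z_{i}Z_{i+d}$ for $d\neq 0$, whose concentration around zero needs its own (easy but unstated) LLN argument before the limiting variance matches $\frac{n\eta\eta'}{4}\sum_{k}\lambda_{k}^{2}$. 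The paper is equally informal about joint normality, so this is a shared, not a disqualifying, gap; on balance your write-up is a legitimate alternative proof with a slicker covariance computation and a heavier CLT mechanism.
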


\begin{proof}
The proof is similar to that given in \cite[ Theorem 3]{CLTRLWE2022}. To be self-contained, we provide a detailed proof. Let $[Z_0, \ldots, Z_{n-1}]^T$ be the coefficients in $Z$, and $[Z^{\prime }_0, \ldots, Z^{\prime }_{n-1}]^T$ be the coefficients in $Z^{\prime }$.
Let $Y=ZZ^{\prime }$, we have that $Y$ has components $Y_{i}={\textstyle\sum\nolimits}%
_{j=0}^{n-1}\xi (i-j)Z_{i-j}Z_{j}^{\prime }$, where $\xi (x)=$ sign$%
(x)$ for $x\neq 0$ and $\xi (0)=1$ \cite[p.~338]{KyberNoise2022}. Note that $Z_{i-j}=Z_{(i-j)\bmod n}$%
. The mean of such a component $Y_{i}$ is given by%
\begin{equation*}
E(Y_{i})={\textstyle\sum\nolimits}_{j=0}^{n-1}E\left( \xi (i-j)Z_{i-j}Z_{j}^{\prime
}\right) =0 .
\end{equation*}%
The summands of a component $Y_{i}$ are independent, so the variance is
given by%
\begin{equation}
Var(Y_{i}) ={\textstyle\sum\nolimits}_{j=0}^{n-1}Var\left( \xi
(i-j)Z_{i-j}Z_{j}^{\prime }\right) =n/4\cdot \eta \eta ^{\prime } .\label{autocov}
\end{equation}

A similar argument shows that covariance of distinct components $Y_{i}$ and $%
Y_{i^{\prime }}$ for $i\neq i^{\prime }$ is given by
\begin{eqnarray}
Cov(Y_{i},Y_{i^{\prime }})
&=&E\left( {\textstyle\sum\nolimits}_{j=0}^{n-1}\xi (i-j)Z_{i-j}Z_{j}^{\prime }\cdot
{\textstyle\sum\nolimits}_{j=0}^{n-1}\xi (i^{\prime }-j)Z_{i^{\prime }-j}Z_{j}^{\prime
}\right)   \notag \\
&&-{\textstyle\sum\nolimits}_{j=0}^{n-1}E\left( \xi (i-j)Z_{i-j}Z_{j}^{\prime }\right)
{\textstyle\sum\nolimits}_{j=0}^{n-1}E\left( \xi (i^{\prime }-j)Z_{i^{\prime
}-j}Z_{j}^{\prime }\right)   \notag \\
&\overset{a}{=}&(\eta ^{\prime }/2)\cdot {\textstyle\sum\nolimits}_{j=0}^{n-1}\xi (i-j)\xi
(i^{\prime }-j)E\left( Z_{i-j}\right) E\left( Z_{i^{\prime }-j}\right)
\notag \\
&&+(\eta /2)\cdot{\textstyle\sum\nolimits}_{t=0}^{n-1}\xi (i-t)\xi (i^{\prime
}-t)E\left( Z_{i-t}^{\prime }\right) E\left(
Z_{i^{\prime }-t}^{\prime }\right)  \notag \\
&=&0. \label{cov}
\end{eqnarray}
The equality $(a)$ holds since for $(j\neq j^{\prime },i-j\neq i^{\prime }-j^{\prime
})$, 
\begin{eqnarray*}
E\left( \xi (i-j)Z_{i-j}Z_{j}^{\prime }\cdot
\xi (i^{\prime }-j^{\prime })Z_{i^{\prime }-j^{\prime }}Z_{j^{\prime
}}^{\prime }\right) 
&=&E\left( \xi (i-j)Z_{i-j}Z_{j}^{\prime }\right) \cdot
E\left( \xi (i^{\prime }-j^{\prime })Z_{i^{\prime }-j^{\prime }}Z_{j^{\prime
}}^{\prime }\right) , 
\end{eqnarray*}
i.e., $Z_{i-j}$, $Z_{j}^{\prime }$, $Z_{i^{\prime }-j^{\prime }}$, $%
Z_{j^{\prime }}^{\prime }$ are mutually independent. Therefore, we only need
to consider the components with $(j=j^{\prime },i-j\neq i^{\prime
}-j^{\prime })$ and $(j\neq j^{\prime },i-j=i^{\prime }-j^{\prime } \bmod n =t)$. We
have%
\begin{equation*}
Cov(Y_{i},Y_{i^{\prime }})=Cov(Y_{i},Y_{i^{\prime }})|_{j=j^{\prime
},i-j\neq i^{\prime }-j^{\prime }}+Cov(Y_{i},Y_{i^{\prime }})|_{j\neq
j^{\prime },i-j=i^{\prime }-j^{\prime } \bmod n =t}
\end{equation*}%
where%
\begin{eqnarray*}
&&Cov(Y_{i},Y_{i^{\prime }})|_{j=j^{\prime },i-j\neq i^{\prime }-j^{\prime }}\notag \\
&=&E\left( {\textstyle\sum\nolimits}_{j=0}^{n-1}\xi (i-j)\xi (i^{\prime
}-j)Z_{i-j}Z_{i^{\prime }-j}Z_{j}^{\prime }Z_{j}^{\prime }\right)  \\
&&-{\textstyle\sum\nolimits}_{j=0}^{n-1}E\left( \xi (i^{\prime }-j)Z_{i^{\prime
}-j}Z_{j}^{\prime }\right) E\left( \xi (i^{\prime }-j)Z_{i^{\prime
}-j}Z_{j}^{\prime }\right)  \\
&=&{\textstyle\sum\nolimits}_{j=0}^{n-1}\left( E\left( Z_{j}^{\prime 2}\right) -E\left(
Z_{j}^{\prime }\right) ^{2}\right) \xi (i-j)\xi (i^{\prime }-j)E\left(
Z_{i-j}\right) E\left( Z_{i^{\prime }-j}\right)  \\
&=&\eta ^{\prime }/2{\textstyle\sum\nolimits}_{j=0}^{n-1}\xi (i-j)\xi (i^{\prime
}-j)E\left( Z_{i-j}\right) E\left( Z_{i^{\prime }-j}\right),
\end{eqnarray*}
\begin{eqnarray*}
&&Cov(Y_{i},Y_{i^{\prime }})|_{j\neq j^{\prime },i-j=i^{\prime }-j^{\prime
} \bmod n = t}\notag \\
&=&E\left( {\textstyle\sum\nolimits}_{t=0}^{n-1}\xi (i-t)\xi (i^{\prime
}-t)Z_{t}Z_{t}Z_{i-t}^{\prime }Z_{i^{\prime }-t}^{\prime }\right) \notag \\
&&-{\textstyle\sum\nolimits}_{t=0}^{n-1}E\left( \xi (i-t)Z_{t}Z_{i-t}^{\prime }\right)
E\left( \xi (i^{\prime }-t)Z_{t}Z_{i^{\prime }-t}^{\prime }\right)  \\
&=&{\textstyle\sum\nolimits}_{t=0}^{n-1}\left( E\left( Z_{t}^{2}\right) -E\left(
Z_{t}\right) ^{2}\right) \xi (i-t)\xi (i^{\prime
}-t) E\left( Z_{i-t}^{\prime }\right) E\left(
Z_{i^{\prime }-t}^{\prime }\right)  \\
&=&\eta /2{\textstyle\sum\nolimits}_{t=0}^{n-1}\xi (i-t)\xi (i^{\prime
}-t)E\left( Z_{i-t}^{\prime }\right) E\left(
Z_{i^{\prime }-t}^{\prime }\right).
\end{eqnarray*}
According to (\ref{autocov}) and (\ref{cov}), $Y=ZZ^{\prime }$ has
covariance matrix $n/4\cdot \eta \eta ^{\prime }I_{n}$.

Furthermore, the summands of $Y_{i}$ are independent and identically
distributed, so a Central Limit argument shows that the distribution of $%
Y_{i}$ is well-approximated by a normal distribution for large $n$. Thus $Y$
can be approximated by a multivariate normal distribution. So we have $%
Y\leftarrow \mathcal{N}(0,n/4\cdot \eta \eta ^{\prime }I_{n})$.
\end{proof}

\begin{remark}
Lemma 1 shows the products $\mathbf{e}^{T}\mathbf{r}$ and $\mathbf{s}^{T} \mathbf{e}_{1}$ in (\ref{Ne}) asymptotically approach multivariate normal distributions for large $n$. This result can be easily generalised to the polynomial product of normal and binomial distributions, and two normal distributions as shown in \cite{CLTRLWE2022}.
\end{remark}

We then present Theorem 1, and show how it enables us to analyze the DFR and redesign the encoder and decoder for Kyber in the rest of the paper.
\begin{theo}
According to the Central Limit Theorem (CLT), the distribution of $n_e$ in (\ref{Ne}) asymptotically approaches the sum of multivariate normal and discrete uniform random variables for large $n$, i.e.,
\begin{equation}
n_{e}\leftarrow \mathcal{N}(0,\sigma _{G}^{2}I_{n}) +\mathcal{U}%
(-\lceil q/2^{d_{v}+1}\rfloor ,\lceil
q/2^{d_{v}+1}\rfloor ),  \label{D_ne}
\end{equation}%
where $\sigma _{G}^{2}=kn\eta _{{1}}^2/4+ kn\eta_{_1}/2 \cdot (\eta_2/2+\var(\psi_{d_u}))+\eta_2/2$. The values of $\var(\psi_{d_u})$ are listed in Table \ref{Kyber_Var}.
\end{theo}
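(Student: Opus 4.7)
The plan is to decompose the decoding noise $n_e$ from (\ref{Ne}) into a Gaussian-approximable part and a purely uniform part, apply Lemma 1 (together with its generalisation in Remark 1) term-by-term, and then combine variances using independence. Specifically, I would rewrite
\[
n_e \;=\; \underbrace{\mathbf{e}^{T}\mathbf{r} \;-\; \mathbf{s}^{T}\mathbf{e}_{1} \;-\; \mathbf{s}^{T}\mathbf{c}_{u} \;+\; e_{2}}_{G} \;+\; c_{v},
\]
and argue that $G$ converges to a centred multivariate normal while $c_v$ remains the discrete-uniform component, since $c_v$ does not appear in a product with any other random polynomial and is assumed independent of the remaining terms by the modelling in (\ref{U_assumption}).

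For the Gaussian part, I would handle each summand in turn. The polynomial products $\mathbf{e}^{T}\mathbf{r}=\sum_{i=1}^{k} e_i r_i$ and $\mathbf{s}^{T}\mathbf{e}_1=\sum_{i=1}^{k} s_i e_{1,i}$ are sums of $k$ independent products of central binomial polynomials, so Lemma 1 applies directly: each $e_i r_i$ contributes $\mathcal{N}(0,(n\eta_1^2/4)I_n)$ and each $s_i e_{1,i}$ contributes $\mathcal{N}(0,(n\eta_1\eta_2/4)I_n)$. For $\mathbf{s}^{T}\mathbf{c}_u$, the second factor is (approximately) uniform rather than binomial, but the proof template of Lemma 1 goes through unchanged: the coefficient-wise sum $Y_i=\sum_j \xi(i-j)s_{i-j,\ell}\,c_{u,j,\ell}$ has independent summands of mean zero, its variance per coordinate equals $n\cdot(\eta_1/2)\cdot\var(\psi_{d_u})$, and the off-diagonal covariance calculation (\ref{cov}) is identical because it relies only on pairwise independence and zero means of the factors. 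This is precisely the extension anticipated in Remark 1, so each term $s_i c_{u,i}$ contributes $\mathcal{N}(0,(n\eta_1/2)\var(\psi_{d_u}) I_n)$. Finally, $e_2\leftarrow \beta_{\eta_2}$ is already a sum of $2\eta_2$ independent centred Bernoullis and contributes variance $\eta_2/2$ per coordinate.

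Summing the $k+k+k+1$ independent contributions above yields total per-coordinate variance
\[
\sigma_G^2 \;=\; \frac{kn\eta_1^2}{4} \;+\; \frac{kn\eta_1\eta_2}{4} \;+\; \frac{kn\eta_1}{2}\var(\psi_{d_u}) \;+\; \frac{\eta_2}{2},
\]
which rearranges to the expression stated in the theorem. The addition of independent multivariate normals with scalar covariance matrices gives $\mathcal{N}(0,\sigma_G^2 I_n)$, and by independence with $c_v$ the distribution of $n_e$ is the claimed convolution.

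The main obstacle is the $\mathbf{s}^{T}\mathbf{c}_u$ term: Lemma 1 is formally stated for two binomial inputs, so I need to justify that the argument carries over when one factor is the rounding noise $\psi_{d_u}$. I would address this by pointing out that only the first two moments and mutual independence of the coefficients of $\psi_{d_u}$ are used in the covariance computation of Lemma 1, while the CLT step requires merely that the coefficient-wise summands are i.i.d.\ with finite variance, both of which hold under the uniformity assumption in (\ref{U_assumption}). A secondary subtlety is that the convergence is asymptotic in $n$, so the equality in (\ref{D_ne}) should be read as an asymptotic approximation in distribution for large $n$, consistent with the statement of Lemma 1 and exactly as formulated in the theorem.
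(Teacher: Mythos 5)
Your proposal is correct and follows essentially the same route as the paper: decompose $n_e$ into a Gaussian-approximable part plus the uniform $c_v$, apply Lemma 1 (and its extension per Remark 1) term by term, and sum the per-coordinate variances, arriving at the same $\sigma_G^2$. The only cosmetic difference is that the paper first lumps $\mathbf{e}_1+\mathbf{c}_u$ into a single Gaussian of variance $\eta_2/2+\var(\psi_{d_u})$ and then applies the normal--binomial product generalisation, whereas you split $\mathbf{s}^{T}\mathbf{e}_1$ and $\mathbf{s}^{T}\mathbf{c}_u$ and extend the covariance computation to the binomial--uniform case directly; both yield identical variance bookkeeping.
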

\begin{proof}
According to Lemma 1,
with $n=256$, the distribution of $\mathbf{e}^{T}\mathbf{r}$ in (\ref{Ne}) is
well-approximated as a multivariate normal distribution, i.e., $\mathbf{e}^{T}\mathbf{r}\leftarrow \mathcal{N}(0,kn\eta _{{1}}^2/4\cdot I_{n})$.
Since $e_{2}\leftarrow \beta_{\eta_{2}}$ behaves very similarly to a (discrete) Gaussian and its variance $\eta _{2}/2$ is much smaller than $kn\eta _{{1}}^2/4$, the
distribution of $\mathbf{e}^{T}\mathbf{r}+e_{2}$ is well-approximated by%
\begin{equation}
\mathbf{e}^{T}\mathbf{r}+e_{2}\leftarrow \mathcal{N}(0,( kn\eta _{{1}}^2/4+\eta _{{2}}/2)I_{n}) . \label{D_ere}
\end{equation}%
We then estimate the distribution of $\mathbf{s}^{T}\left( \mathbf{e}_{1}+%
\mathbf{c}_{u}\right) $. According to \cite{Kyber2021}, the distribution of $\mathbf{e}_{1}+%
\mathbf{c}_{u}$ behaves very similarly to a Gaussian with the same variance $\eta_2/2+\var(\psi_{d_u}) $, i.e.,
\begin{equation}
\mathbf{e}_{1}+\mathbf{c}_{u}  \leftarrow
\mathcal{N}(0,(\eta_2/2+\var(\psi_{d_u}))I_{n}). \label{App_spec3}
\end{equation}
Using Remark 1, the distribution of $\mathbf{s}%
^{T}\left( \mathbf{e}_{1}+\mathbf{c}_{u}\right) $ is well-approximated as%
\begin{equation}
\setlength{\abovedisplayskip}{3pt} \setlength{\belowdisplayskip}{3pt}
\mathbf{s}^{T}\left( \mathbf{e}_{1}+\mathbf{c}_{u}\right) \leftarrow
\mathcal{N}(0,( kn\eta_{1}/2 \cdot (\eta_{2}/2+\var(\psi_{d_u}))) I_{n})  .\label{D_sec}
\end{equation}%
Given (\ref{D_ere}), (\ref{D_sec}), and (\ref{U_assumption}), for a large $n$, we can obtain (\ref{D_ne}).
\end{proof}

\begin{remark}
The idea of Theorem 1 and Lemma 1 is to approximate a discrete normal distribution by a continuous normal distribution. This approximation is commonly used in LWE literature \cite{FrodoCong2022}\cite{Kyber2021}. Theorem 1 shows that the coefficients in $n_e$ are independent identically distributed random variables.
\end{remark}

In Table \ref{Kyber_Var}, we show the CLT-based noise analysis for Kyber. We consider the normalised noise
term $n_{e}/\left\lceil q/2\right\rfloor$ and evaluate its variance. Note that the performance of the best known attacks against LWE-based encryption does not depend on the exact distribution of $n_e$, but rather on its variance. The values of $\var(\psi_{d_u})$ are computed numerically as explained in \cite{Kyber2021}. We observe that the estimated normalised variances match the simulation results.
 
\begin{table}[ht]
\caption{KYBER Decoding Noise Variance (Normalized): $10,000$ samples}
\label{Kyber_Var}\centering
\begin{tabular}{|c|c|c|c|c|}
\hline
& $\var(\psi_{d_u})$ & CLT  & Simulation \\ \hline
KYBER512 &$0.9$ \cite{Kyber2021}& $0.0023$ & $0.0023$ \\ \hline
KYBER768 &$0.9$ \cite{Kyber2021}& $0.0021$ & $0.0021$ \cite{Kyberpolar2022} \\ \hline
KYBER1024 & $0.38$ & $0.0012$ & $0.0012$ \\ \hline
\end{tabular}
\end{table}

In Figure \ref{K_UC_Plot}, we compare the distribution of the term $\hat{n}_{e}=\mathbf{e}^{T}\mathbf{r}+e_{2}-\mathbf{s}^{T} \mathbf{e}%
_{1}$ in (\ref{Ne}) with the normal distribution using MATLAB function
$\text{normplot}()$. This term can be viewed as the \emph{uncompressed} Kyber decoding noise, as $c_{v}= 0$ and $\mathbf{c}_{u}=\mathbf{0}$. Theorem 1 shows that for large $n$,
\begin{equation}
\hat{n}_{e}\leftarrow \mathcal{N}(0,(kn\eta _{{1}}^2/4+ kn\eta_{_1}\eta{_2}/4 +\eta_2/2)I_{n}).
\end{equation}
With $100,000$ samples, we observe that the distribution of the first element in $\hat{n}_{e}$, denoted as $\hat{n}_{e,1}$, follows a normal distribution. This result confirms that Theorem 1 is accurate for Kyber with $n=256$.

\begin{figure}[tbp]
\centering
\includegraphics[width=0.8\textwidth]{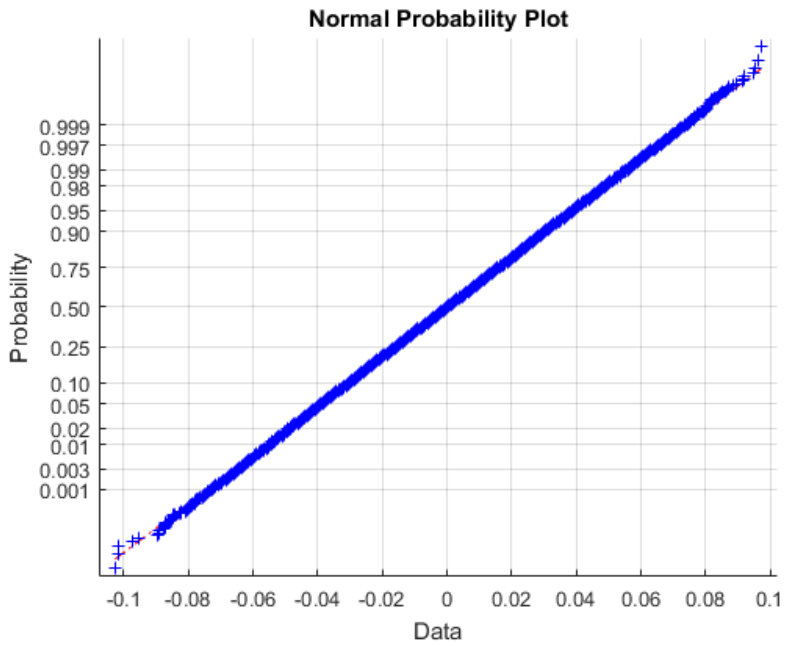} \vspace{0mm} 
\caption{Uncompressed KYBER{\protect\small 768: Comparing the distribution of $\hat{n}_{e,1}$
to the normal distribution with $100,000$ samples.}}
\label{K_UC_Plot}
\end{figure}

\subsection{Tail Bound, DFR, and Sphere Packing} 
\vspace{-1mm}
To gain more insight, we then estimate the magnitude of $n_{e}$ based on (\ref{D_ne}). Without loss of generality, we let
\begin{equation*}
n_{e}^{(\ell)}:=[n_{e,1},n_{e,2},\ldots,n_{e,\ell}]^{T}
\end{equation*}
to be the $\ell$ coefficients in $n_e$, where $1\leq \ell\leq n$. 

\begin{lemma}
$\Pr \left( \Vert n_{e}^{(\ell)}\Vert \leq z\right) \geq 1 - Q_{\ell/2}\left(
\frac{\sqrt{\ell}\lceil q/2^{d_{v}+1}\rfloor }{\sigma _{G}},\frac{z}{%
\sigma _{G}}\right)$, for $z \geq 0$, where $%
Q_{M}\left(a,b\right) $ is the generalised Marcum Q-function.
\end{lemma}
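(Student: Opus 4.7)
The plan is to decompose $n_e$ according to Theorem~1 as $n_e = G + U$, where $G \sim \mathcal{N}(0,\sigma_G^2 I_n)$ is the Gaussian part and $U$ is a vector whose entries are i.i.d.\ uniform on $\{-\lceil q/2^{d_v+1}\rfloor,\dots,\lceil q/2^{d_v+1}\rfloor\}$, and $G,U$ are independent. Restricting to the first $\ell$ coordinates, $n_e^{(\ell)} = G^{(\ell)} + U^{(\ell)}$. The key observation is that conditional on $U^{(\ell)}=u$, the vector $n_e^{(\ell)}/\sigma_G$ is Gaussian with mean $u/\sigma_G$ and covariance $I_\ell$, so $\|n_e^{(\ell)}\|^2/\sigma_G^2$ follows a noncentral chi-squared distribution with $\ell$ degrees of freedom and noncentrality parameter $\|u\|^2/\sigma_G^2$.

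Next I would invoke the standard identification between the tail of a noncentral chi-squared random variable and the generalised Marcum Q-function: for a noncentral chi-squared with $2M$ degrees of freedom and noncentrality $a^2$, the probability that its square root exceeds $b$ equals $Q_M(a,b)$. With $M=\ell/2$ this gives
\begin{equation*}
\Pr\!\left(\|n_e^{(\ell)}\| > z \,\middle|\, U^{(\ell)}=u\right) = Q_{\ell/2}\!\left(\tfrac{\|u\|}{\sigma_G},\tfrac{z}{\sigma_G}\right).
\end{equation*}

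The final step is to remove the conditioning by bounding uniformly in $u$. Since $Q_M(a,b)$ is monotonically non-decreasing in its first argument $a$ for fixed $b\ge 0$ (a classical property of the Marcum Q-function), and since every coordinate of $U$ satisfies $|U_i|\le \lceil q/2^{d_v+1}\rfloor$, we have $\|U^{(\ell)}\|\le \sqrt{\ell}\,\lceil q/2^{d_v+1}\rfloor$ almost surely, and hence
\begin{equation*}
\Pr\!\left(\|n_e^{(\ell)}\| > z \,\middle|\, U^{(\ell)}\right) \le Q_{\ell/2}\!\left(\tfrac{\sqrt{\ell}\,\lceil q/2^{d_v+1}\rfloor}{\sigma_G},\tfrac{z}{\sigma_G}\right).
\end{equation*}
Taking expectations over $U^{(\ell)}$ and passing to the complement yields the claimed lower bound on $\Pr(\|n_e^{(\ell)}\|\le z)$.

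The only subtle step is the monotonicity of $Q_M(a,b)$ in $a$, which I would cite from the literature on the Marcum Q-function rather than reprove; beyond that, the argument is a clean conditioning plus a worst-case bound on the uniform component. No heavy calculation is required, since the distributional characterisation supplied by Theorem~1 does all the work.
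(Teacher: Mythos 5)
Your proposal is correct and follows essentially the same route as the paper's proof: condition on the uniform component, identify $\|n_e^{(\ell)}\|/\sigma_G$ as noncentral chi (equivalently, express its tail via the generalised Marcum Q-function $Q_{\ell/2}$), and then use the monotonicity of $Q_M(a,b)$ in $a$ together with the worst-case bound $\|U^{(\ell)}\|\leq \sqrt{\ell}\,\lceil q/2^{d_v+1}\rfloor$ before averaging over the conditioning. The only cosmetic difference is that the paper writes the averaging as an explicit finite sum over the $N_U^{\ell}$ sample points of the discrete uniform vector, whereas you take an expectation directly; the substance is identical.
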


\begin{proof}
Given Theorem 1, we can write
$n_{e,i}=x_{i}+y_{i}$, where $x_{i}\leftarrow \mathcal{N}( 0,\sigma
_{G}^{2}) $ and $y_{i}\leftarrow \mathcal{U}(-\lceil
q/2^{d_{v}+1}\rfloor ,\lceil q/2^{d_{v}+1}\rfloor )$.
Let $\mathbf{y}=[y_{1},y_{2},\ldots , y_{\ell}]^{T}$ and $N_{U}=2\lceil
q/2^{d_{v}+1}\rfloor +1$. We have%
\begin{eqnarray*}
&&\Pr \left( \Vert n_{e}^{(\ell)}\Vert \leq z\right)  \\
&=&\Pr \left( \sqrt{{\textstyle\sum\nolimits}_{i=1}^{\ell}\left( x_{i}+y_{i}\right) ^{2}}%
\leq z\right)  \\
&=&{\textstyle\sum\nolimits}_{j=1}^{N_{U}^{\ell}}\Pr \left( \sqrt{{\textstyle\sum\nolimits}_{i=1}^{\ell}%
\left( x_{i}+\mu _{j,i}\right) ^{2}}\leq z\left\vert \mathbf{y}=\mathbf{\mu }%
_{j}\right. \right) \Pr \left( \mathbf{y}=\mathbf{\mu }_{j}\right)  \\
&=&\frac{1}{N_{U}^{\ell}}{\textstyle\sum\nolimits}_{j=1}^{N_{U}^{\ell}}\Pr \left( \sqrt{%
{\textstyle\sum\nolimits}_{i=1}^{\ell}\left( x_{i}+\mu _{j,i}\right) ^{2}}\leq z\left\vert
\mathbf{y}=\mathbf{\mu }_{j}\right. \right),
\end{eqnarray*}
where $\mathbf{\mu }_{j}=[\mu _{j,1},\mu _{j,2},\ldots ,\mu _{j,\ell}]^{T}$ is a
sample point in the sample space of $\mathbf{y}$.
Given $y_{i}=\mu _{j,i}$, $\sqrt{{\textstyle\sum\nolimits}_{i=1}^{\ell}\left( x_{i}+\mu
_{j,i}\right) ^{2}}$ follows non-central chi distribution, i.e.,%
\begin{align*}
\Pr \left( \sqrt{{{\textstyle\sum\nolimits}_{i=1}^{\ell}\left( x_{i}+\mu_{j,i}\right) ^{2}%
}/{\sigma _{G}^{2}}}\leq {z}/{\sigma _{G}}\left\vert \mathbf{y}_{j}=%
\mathbf{\mu }_{j}\right. \right) =1-Q_{\ell/2}\left( \sqrt{{%
{\textstyle\sum\nolimits}_{i=1}^{\ell}\mu _{j,i}^{2}}/{\sigma _{G}^{2}}},{z}/{\sigma
_{G}}\right),
\end{align*}%
where $Q_{M}\left( a,b\right) $ is the generalized Marcum Q-function. Since $%
Q_{M}\left( a,b\right) $ is strictly increasing in $a$ for all $a\geqslant 0$%
, we have%
\begin{align*}
\Pr \left( \Vert n_{e}^{(l)}\Vert \leq z\right)  &= 1-\frac{1}{N_{U}^{\ell}}%
{\textstyle\sum\nolimits}_{j=1}^{N_{U}^{\ell}}
Q_{\ell/2}\left( \sqrt{{%
{\textstyle\sum\nolimits}_{i=1}^{\ell}\mu _{j,i}^{2}}/{\sigma _{G}^{2}}},{z}/{\sigma
_{G}}\right)  \\
&\geqslant 1-Q_{\ell/2}\left( {\sqrt{\ell}/\sigma_{G} 
\cdot \lceil
q/2^{d_{v}+1}\rfloor },{z}/{\sigma _{G}}\right).
\end{align*}%
\end{proof}

\begin{remark}
With $\ell=1$ and $z=\left\lfloor q/4\right\rceil$, we can easily show
\begin{equation*}
\Pr \left( \Vert n_{e,1}\Vert \leq \left\lfloor q/4\right\rceil \right) \geq 1-2Q\left( \left(\lfloor q/4\rceil -\lceil q/2^{d_{v}+1}\rfloor \right) /\sigma
_{G}\right),
\end{equation*}
where $Q(a)$ is the Q-function. It gives an explicit upper bound on the DFR of the original Kyber encoder in (\ref{KyberDFR}):
\begin{equation*}
\delta \leq 1-\left(1-2Q\left( \left(\lfloor q/4\rceil -\lceil q/2^{d_{v}+1}\rfloor \right) /\sigma
_{G}\right)\right)^n.
\end{equation*}
We have $\delta \leq 2^{-142}, 2^{-167}, 2^{-176}$, for KYBER512/768/1024, respectively. Note that the values of $\delta$ in Table \ref{Kyber_Par} are evaluated numerically in \cite{Kyber2018}\cite{Kyber2021}. No closed-form expression is provided.
\end{remark}

\begin{sloppypar}
Lemma 2 shows that
Kyber decoding noise $n_{e}^{(\ell)}$ can be bounded with probability $1-Q_{\ell/2}\left(
{\sqrt{\ell}\lceil q/2^{d_{v}+1}\rfloor }/{\sigma _{G}},{z}/{%
\sigma _{G}}\right)
$ by a hypersphere with radius $z$. Since the addition in (\ref{decoding_mode}) is over the modulo $q$ domain, we can view the Kyber encoding problem as
a sphere packing problem: an arrangement of non-overlapping spheres within
a hypercube $\mathbb{Z}_{q}^{\ell}$. The original Kyber uses the lattice codes $\lfloor q/2\rceil\mathbb{Z}_{2}^{\ell}$ for sphere packing purposes, which is far from optimal. Even for very small dimensions $\ell$,
there exist much denser lattices that maintain the same minimal distance
between points. In the next section, we will construct lattice codes with denser packing and lower DFR.
\end{sloppypar}

\section{Lattice Codes for Kyber}
\subsection{Lattice Codes, Hypercube Shaping, and CVP Decoding}
\begin{definition}[Lattice]
An $\ell$-dimensional lattice $\Lambda$ is a
discrete additive subgroup of $\mathbb{R}^m$, $m \leq \ell$. Based on $\ell$ linearly independent vectors $b_1, \ldots, b_\ell$ in $\mathbb{R}^m$, $\Lambda$ can be
written as
\begin{equation*}
\Lambda =\mathcal{L}(\mathbf{B})=z_{1}\mathbf{b}_{1}+\cdots z_{\ell}\mathbf{b}%
_{\ell},
\end{equation*}%
where $z_{1},\ldots,z_{\ell}\in \mathbb{Z}$, and $\mathbf{B}=[\mathbf{b}_{1},\ldots,%
\mathbf{b}_{\ell}]$ is referred to as a generator matrix of $\Lambda$.
\end{definition}

\begin{definition}[Lattice Code]
A lattice code $\mathcal{C}(\Lambda, \mathcal{P})$ is the finite set of points
in $\Lambda$ that lie within the region $\mathcal{P}$:
\begin{equation*}
\mathcal{C}(\Lambda, \mathcal{P}) = \Lambda \cap \mathcal{P}.
\end{equation*}
If $\mathcal{P}=\mathbb{Z}_p^\ell$, the code $\mathcal{C}(\Lambda, \mathbb{Z}_p^\ell)$ is said to be generated from hypercube shaping (HS).
\end{definition}

\begin{definition}[CVP Decoder]
Given an input $\mathbf{y} \in \mathbb{R}^\ell$, the Closest Vector Problem (CVP) decoder returns the closet lattice vector to $\mathbf{y}$ over the lattice $\mathcal{L}(\mathbf{B})$, i.e.,
\begin{equation*}
\mathbf{x}=\mathsf{CVP}(\mathbf{y}, \mathcal{L}(\mathbf{B}))=\arg\min_{\mathbf{x}' \in \mathcal{L}(\mathbf{B})} \| \mathbf{x}'-\mathbf{y} \|.
\end{equation*}
\end{definition}

\begin{definition}[HS Encoder \cite{FrodoCong2022}]
Considering the Smith Normal Form factorization (SNF) of a lattice basis $\mathbf{B}$, denoted as $\mathbf{B}= \mathbf{U}\cdot \diag (\pi_1, \ldots, \pi_\ell) \cdot \mathbf{U}'$, where $\mathbf{U}, \mathbf{U}' \in \mathbb{Z}^{\ell \times \ell}$ are unimodular matrices.
Let the message space be
\begin{equation}
\mathcal{M}_{p,\ell}=\left\{ 0,1,\ldots,p/\pi_1-1\right\} \times \cdots \times
\left\{ 0,1,\ldots,p/\pi_\ell-1\right\} , \label{m_p}
\end{equation}%
where $p > 0$ is a common multiplier of $\pi_1, \ldots, \pi_\ell$. Given $\mathbf{m} \in \mathcal{M}_{p, \ell}$, a HS encoder returns a codeword $\mathbf{x} \in \mathcal{C}(\mathcal{L}(\mathbf{B}), \mathbb{Z}_p^\ell)$:
\begin{equation}
\mathbf{x} = \hat{\mathbf{B}}\mathbf{m} \bmod p , \label{HC_shaping}
\end{equation}
where $\hat{\mathbf{B}}= \mathbf{U}\cdot \diag (\pi_1, \ldots, \pi_\ell)$.
\end{definition}

\begin{definition}[HS-CVP Decoder \cite{FrodoCong2022}]
Let $\mathbf{y}$ be a noisy version of $\mathbf{x}\in \mathcal{C}(\mathcal{L}(\mathbf{B}), \mathbb{Z}_p^\ell) $. The HS-CVP decoder returns an estimated message $ \hat{\mathbf{m}} =[\hat{m}_1, \ldots, \hat{m}_\ell]^T \in \mathcal{M}_{p,l}$:
\begin{align}
\hat{\mathbf{m}} &=\mathsf{HS-CVP}(\mathbf{y}, \mathcal{L}(\mathbf{B}))  \notag \\ &=\hat{\mathbf{B}}^{-1}\cdot\mathsf{CVP}(\mathbf{y}, \mathcal{L}(\mathbf{B})) \bmod (p/\pi_1, \ldots, p/\pi_\ell), \label{HC_decoding}
\end{align}%
where $\hat{m}_i= (\hat{\mathbf{B}}^{-1}\mathsf{CVP}(\mathbf{y}, \mathcal{L}(\mathbf{B}))_i \bmod p/\pi_i$, for $i= 1,\ldots, \ell$.
\end{definition}

For the choice of $\mathcal{L}(\mathbf{B})$, in this work, we consider Barnes–Wall lattice with $\ell=16$ (BW16) and Leech lattice with $\ell=24$ (Leech24)\cite{BK:Conway93}. They provide the optimal packing in the corresponding dimensional space and have constant-time CVP decoders \cite{FrodoCong2022}\cite{constanttimeLeech2016}. Since the coefficients in Kyber are integers, we will scale the original generator matrix to an integer matrix $\mathbf{B} \in \mathbb{Z}^{\ell \times \ell}$, and use the corresponding $\hat{\mathbf{B}}$.

\begin{example} For BW16, the matrix $\mathbf{B}$ is given by \cite{FrodoCong2022}
\begin{equation}
\setlength{\arraycolsep}{3pt}
\mathbf{B} = \begin{bNiceMatrix}[r, columns-width=auto]
1 & 1 & 1 & 1 & 1 & 2 & 2 & 2 & 2 & 2 & 2 & 2 & 2 & 2 & 2 & 4 \\
1 & 1 & 1 & 1 & 0 & 2 & 2 & 0 & 2 & 0 & 0 & 2 & 0 & 0 & 0 & 0 \\
1 & 1 & 1 & 0 & 1 & 2 & 0 & 2 & 0 & 2 & 0 & 0 & 2 & 0 & 0 & 0 \\
1 & 1 & 1 & 0 & 0 & 2 & 0 & 0 & 0 & 0 & 0 & 0 & 0 & 0 & 0 & 0 \\
1 & 1 & 0 & 1 & 1 & 0 & 2 & 2 & 0 & 0 & 2 & 0 & 0 & 2 & 0 & 0 \\
1 & 1 & 0 & 1 & 0 & 0 & 2 & 0 & 0 & 0 & 0 & 0 & 0 & 0 & 0 & 0 \\
1 & 1 & 0 & 0 & 1 & 0 & 0 & 2 & 0 & 0 & 0 & 0 & 0 & 0 & 0 & 0 \\
1 & 1 & 0 & 0 & 0 & 0 & 0 & 0 & 0 & 0 & 0 & 0 & 0 & 0 & 0 & 0 \\
1 & 0 & 1 & 1 & 1 & 0 & 0 & 0 & 2 & 2 & 2 & 0 & 0 & 0 & 2 & 0 \\
1 & 0 & 1 & 1 & 0 & 0 & 0 & 0 & 2 & 0 & 0 & 0 & 0 & 0 & 0 & 0 \\
1 & 0 & 1 & 0 & 1 & 0 & 0 & 0 & 0 & 2 & 0 & 0 & 0 & 0 & 0 & 0 \\
1 & 0 & 1 & 0 & 0 & 0 & 0 & 0 & 0 & 0 & 0 & 0 & 0 & 0 & 0 & 0 \\
1 & 0 & 0 & 1 & 1 & 0 & 0 & 0 & 0 & 0 & 2 & 0 & 0 & 0 & 0 & 0 \\
1 & 0 & 0 & 1 & 0 & 0 & 0 & 0 & 0 & 0 & 0 & 0 & 0 & 0 & 0 & 0 \\
1 & 0 & 0 & 0 & 1 & 0 & 0 & 0 & 0 & 0 & 0 & 0 & 0 & 0 & 0 & 0 \\
1 & 0 & 0 & 0 & 0 & 0 & 0 & 0 & 0 & 0 & 0 & 0 & 0 & 0 & 0 & 0 %
\end{bNiceMatrix}. \label{BW16B}
\end{equation}
The corresponding matrix $\hat{\mathbf{B}}$ is given by
\begin{equation}
\setlength{\arraycolsep}{3pt}
\hat{\mathbf{B}}=\begin{bNiceMatrix}[r, columns-width=auto]
1 & 0 & 0 & 0 & 0 & 0 & 0 & 0 & 0 & 0 & 0 & 0 & 0 & 0 & 0 & 0 \\
1 &-1 & 0 & 0 & 0 & 0 & 0 & 0 & 0 & 0 & 0 & 0 & 0 & 0 & 0 & 0 \\
1 & 0 &-1 & 0 & 0 & 0 & 0 & 0 & 0 & 0 & 0 & 0 & 0 & 0 & 0 & 0 \\
1 &-1 &-1 & 0 & 0 & 0 & 0 & 2 & 2 & 0 & 0 & 0 & 0 & 0 & 0 & 0 \\
1 & 0 & 0 &-1 & 0 & 0 & 0 & 0 & 0 & 0 & 0 & 0 & 0 & 0 & 0 & 0 \\
1 &-1 & 0 &-1 & 0 & 2 & 0 & 0 & 0 & 0 & 0 & 0 & 0 & 0 & 0 & 0 \\
1 & 0 &-1 &-1 & 0 & 0 & 2 & 0 & 0 & 0 & 0 & 0 & 0 & 0 & 0 & 0 \\
1 &-1 &-1 &-1 & 0 & 2 & 2 & 2 & 0 & 0 & 0 & 0 & 0 & 0 & 0 & 0 \\
1 & 0 & 0 & 0 &-1 & 0 & 0 & 0 & 0 & 0 & 0 & 0 & 0 & 0 & 0 & 0 \\
1 &-1 & 0 & 0 &-1 & 0 & 0 & 0 & 0 & 2 & 0 & 0 & 0 & 0 & 0 & 0 \\
1 & 0 &-1 & 0 &-1 & 0 & 0 & 0 & 0 & 0 & 2 & 0 & 0 & 0 & 0 & 0 \\
1 &-1 &-1 & 0 &-1 & 0 & 0 & 2 & 2 & 2 & 2 &-2 & 0 & 0 & 0 & 0 \\
1 & 0 & 0 &-1 &-1 & 0 & 0 & 0 & 0 & 0 & 2 &-2 & 2 & 0 & 0 & 0 \\
1 &-1 & 0 &-1 &-1 & 2 & 0 & 0 & 0 & 2 & 2 &-2 & 0 & 2 & 0 & 0 \\
1 & 0 &-1 &-1 &-1 & 0 & 2 & 0 & 0 & 0 & 2 & 0 & 2 & 0 & 2 & 0 \\
1 &-1 &-1 &-1 &-1 & 2 & 2 & 2 & 0 & 2 & 2 &-2 & 0 & 2 & 2 & 4
\end{bNiceMatrix}, \label{BW16Bh}
\end{equation}
The values of $\pi_1 \ldots \pi_{\ell=16}$ are given by
\begin{equation}
\left[\pi_1 \ldots \pi_{16} \right]=\left[1 , 1 , 1 , 1 , 1 , 2 , 2 , 2 , 2 , 2 , 2 , 2 , 2 , 2 , 2 , 4 \right]. \label{BW16pi}
\end{equation}
\end{example}

\subsection{Lattice Encoder and Decoder for Kyber}
We divide the $n$ coefficients in $m$ to $\kappa =n/\ell$ blocks:
\begin{equation}
m=[m_{\ell,1},\ldots, m_{\ell,\kappa}]^T \in \mathcal{M}_{p, \ell}^\kappa, \label{m_block_ele}
\end{equation}
where $m_{\ell,i}=[m_{\ell,i,1},\ldots, m_{\ell,i,\ell}]^T \in \mathcal{M}_{p, \ell}$ in (\ref{m_p}) is the $i^{\text{th}}$ block of coefficients in $m$, $i=1,\ldots, \kappa$. From the viewpoint of lattice, the model in (\ref{decoding_mode}) can be generalized to
\begin{align}
y&= \left\lceil q/p\right\rfloor \cdot[\hat{\mathbf{B}}m_{\ell,1} \bmod p,\ldots, \hat{\mathbf{B}}m_{\ell,\kappa} \bmod p]^T +n_{e} , \label{decoding_mode_lattice}
\end{align}
where $\hat{\mathbf{B}}$ is given in Definition 4,  and $\hat{\mathbf{B}}m_{\ell,i} \bmod p  \in  \mathcal{C}(\mathcal{L}(\mathbf{B}), \mathbb{Z}_p^\ell)$, for $i=1,\ldots, \kappa$. In other words, the message $m \in \mathcal{M}_{p, \ell}^\kappa$ is encoded to $\kappa$ lattice points.

We describe the lattice encoder and decoder in Algorithms 4 and 5, where the encoder $\Lambda$-$\mathsf{Enc}$ takes as the input a message $m \in \mathcal{M}_{p, \ell}^\kappa$ and outputs $v$ (i.e., the second part of the ciphertext), and where the decoder $\Lambda$-$\mathsf{Dec}$ takes as input a ciphertext $(\bu,v)$ and outputs a message.

\vspace{-1mm}
\begin{algorithm}[H]
\caption{$\Lambda$-$\mathsf{Enc}(m \in \mathcal{M}_{p, \ell}^\kappa)$: lattice encoder \\//Replace Step 5 in \textbf{Algorithm 2}}
\label{lattice_enc}
\begin{algorithmic}[1]

    \State
    $x_{\ell,i}:=\hat{\mathbf{B}}\cdot m_{\ell,i} \bmod p$, $i=1,\ldots, \kappa$

    \State
    $x:=[x_{\ell,1},\ldots, x_{\ell,\kappa}]^T$

    \State
   \Return  $v:= \mathsf{Compress}_{q}( \mathbf{t}^{T}\mathbf{r}+e_{2}+\lfloor q/p\rceil \cdot x,d_{v}) $

\end{algorithmic}
\end{algorithm}

\vspace{-5mm}

\begin{algorithm}[H]
\caption{$\Lambda$-$\mathsf{Dec}(\bu,v)$: lattice decoder \\//Replace Step 3 in \textbf{Algorithm 3}}
\label{lattice_dec}
\begin{algorithmic}[1]

    \State
    $y:=v -\mathbf{s}^{T}\cdot\mathbf{u} = [y_{\ell,1},\ldots, y_{\ell,\kappa}]^T$, where $y_{\ell,i}$ is the $i^{\text{th}}$ block of coefficients in $y$

    \State
    ${m}_{\ell,i} :=\mathsf{HS-CVP}(y_{\ell,i}/\lfloor q/p\rceil, \mathcal{L}(\mathbf{B})),  i=1, \ldots, \kappa$

    \State
   \Return  ${m}:=[{m}_{\ell,1}, \ldots, {m}_{\ell,\kappa}]^T$

\end{algorithmic}
\end{algorithm}

\vspace{-3mm}

\begin{remark}
With $\mathcal{L}(\mathbf{B})= \mathbb{Z}^{\ell}$, $\hat{\mathbf{B}}=I_\ell$, and $p=2$, the proposed lattice encoder and decoder reduce to the original ones used in Kyber.
\end{remark}

\subsection{CER and DFR Reduction}
The number of information bits for each block is given by
\begin{equation}
b(\ell,p)={\textstyle\sum\nolimits}_{i=1}^{\ell}\log _{2}(p/\pi_{i}), \label{b_pl}
\end{equation}%
where $\pi_{i}$ is given in Definition 4, for $i=1,\ldots, \ell$. The total number of encrypted bits can be calculated by 
\begin{equation}
N=\kappa b(\ell,p). \label{num_ebits}
\end{equation}
As shown in Step 3 in Algorithm 4, the ciphertext size remains unchanged. The CER-R in (\ref{CER_R}) can be calculated by:
\begin{equation*}
\text{CER-R}=1-n/N=1-l/b(\ell,p). 
\end{equation*}

Let $\lambda (p)$ be the length of a shortest non-zero vector in the lattice $\mathcal{L}(\lfloor
q/p\rceil\mathbf{B})$. The correct decoding radius of HS-CVP decoder is $\lambda (p)/2$. Let $\delta_{i}$ be the decryption failure rate for the $i^{\text{th}}$ block of coefficients ${m}_{\ell,i}$ in (\ref{m_block_ele}). Using Lemma 2, we have
\begin{equation*}
\delta_i\leq  Q_{\ell/2}\left({\sqrt{\ell}/\sigma _{G} \cdot \lceil q/2^{d_{v}+1}\rfloor }, \lambda (p)/(2\sigma _{G})\right).
\end{equation*}%
According to Theorem 1, the noise elements in (\ref{decoding_mode_lattice}) are independent from block to block. Therefore, the decryption failure rate of $\kappa $ blocks can be calculated by 
\begin{equation}
\delta = 1-\prod_{i-1}^{\kappa}(1-\delta_i) \approx \kappa Q_{\ell/2}\left(
{\sqrt{\ell}/\sigma _{G} \cdot \lceil q/2^{d_{v}+1}\rfloor }, \lambda (p)/(2\sigma _{G})\right). \label{DFR_L}
\end{equation}%

Table \ref{KYBER_LC} compares the normalized correct decoding radius $\lambda(p)/(2\left\lfloor q/2\right\rceil)$\thinspace
, CER-R, $N$, and $\delta$ for difference lattices. We observe a tradeoff between CER and DFR from Table \ref{KYBER_LC}: a smaller CER requires a larger $p$, which will reduce the correct decoding radius of $\mathcal{L}(\lfloor
q/p\rceil\mathbf{B})$ accordingly. Interestingly, KYBER1024 enjoys a much lower DFR compared to KYBER512/768. This can be explained by the smallest noise variance in Table \ref{Kyber_Var}. 

\begin{table}[th]
\centering
\caption{Lattice Codes for KYBER}
\label{KYBER_LC}\centering
\begin{tabular}{|c|c|c|c|}
\hline
Lattice & $\mathbb{Z}^{\ell}$ \cite{Kyber2021}  & $\text{BW16}$ & $\text{Leech24}$ \\ \hline
$p$ & $2$ & $4$ & $8$ \\ \hline
$\lambda(p)/(2\left\lfloor q/2\right\rceil )$ & $0.5$ &$0.7067$ & $0.7067$ \\ \hline
$b(\ell,p)/\ell$ & $1$ & $20/16$ & $36/24$ \\ \hline
$N$ & $256$ &$320$ & $380\footnotemark[1]$ \\ \hline
CER-R & $0\%$ & $20\%$ & $32.6\%$ \\ \hline
$\delta$: KYBER512 & $2^{-139}$ & $2^{-149}$ & $2^{-111}$ \\ \hline
$\delta$: KYBER768 & $2^{-164}$ & $2^{-177}$ & $2^{-131}$ \\ \hline
$\delta$: KYBER1024 & $2^{-174}$ & $2^{-259}$ & $2^{-226}$ \\ \hline
\end{tabular}%
\footnotetext[1]{Since $n=256=10\times 24+16$, we consider $10$ Leech24 codewords with $p=8$ and $1$ BW16 codeword with $p=4$}
\end{table}

\subsection{Decoding Complexity, Security, and Observation}
As shown in (\ref{HC_decoding}), the decoding complexity is determined by the CVP decoder for the selected lattice. For \text{BW16} lattice, the running time of the constant-time CVP decoder is proportional to $32\ell$ \cite{FrodoCong2022}. For the Leech lattice, the CVP decoder requires $3,595$ real operations in the worst case \cite{VD93Leech}. The constant-time implementation of \cite{VD93Leech} adds $34.5\%$ of overhead to the algorithm \cite{constanttimeLeech2016}. The proposed lattice encoder is resistant to the timing side-channel attacks. Since lattice encoder/decoding doesn't affect the M-LWE hardness assumption in Kyber \cite{Kyber2021}, the security arguments remain the same.

\section{Reducing CER and DFR with a Fixed Plaintext Size}
In the previous section, we have shown that lattice codes can reduce CER by increasing the plaintext size. In many applications, the plaintext size is fixed, e.g., 256 bits. In this section, we will explain how to reduce the CER and DFR of Kyber, with a fixed plaintext size $K=256$.
\subsection{Bit-Interleaved Coded Modulation for Kyber}
We propose a bit-interleaved coded modulation (BICM) approach, which combines an error-correcting code and a lattice encoder. The concept of BICM was originally proposed for the Rayleigh fading channel \cite{BICM1998}. It combines coding with
modulation in a situation where the performance of an error-correcting code depends on its minimum Hamming distance, rather than on the minimum Euclidean distance of the modulation symbols. This is achieved by bit-wise interleaving at the error-correcting encoder output. Since the lattice encoder can be viewed as a form of modulation \cite{FrodoCong2022}, we can apply error correcting codes combined with lattice encoder, by using the extra $N-K$ bits (enabled by the proposed lattice encoder) as the parity check bits of an error correcting code. The values of $N$ are given in Table \ref{KYBER_LC}.

\begin{definition}[Bit Mapper and Demapper]
The BICM approach involves bit mapper and demapper, where the former maps binary bits to an integer vector, while the latter performs the
inverse operation. Let $b_{\ell,i} \in \mathcal{M}_{2, b(\ell,p)}$ be the bit representation of $m_{\ell,i}\in \mathcal{M}_{p, \ell}$ in (\ref{m_block_ele}), for $i=1,\ldots, \kappa$. The bit mapper $\mathcal{M}_{2, b(\ell,p)} \mapsto \mathcal{M}_{p, \ell}$ and demapper $\mathcal{M}_{p, \ell} \mapsto \mathcal{M}_{2, b(\ell,p)}$ are defined by
\begin{align}
\mathsf{bit2int}(b_{\ell,i})&=m_{\ell,i}=[m_{\ell,i,1},\ldots, m_{\ell,i,\ell}]^T  \nonumber \\
\mathsf{int2bit}(m_{\ell,i})&=b_{\ell,i}=[b_{\ell,i,1},\ldots, b_{\ell,i, b(\ell,p)}]^T, \label{b_l_i}
\end{align}
where the $j^{\text{th}}$ element in $m_{\ell,i}$, denoted as $m_{\ell,i,j}$, is converted to the $\log_{2}(p/\pi_{j})$ bits:
\begin{equation*}
\left[b_{\ell,i,{\textstyle\sum\nolimits}_{w=1}^{j-1}\log _{2}(p/\pi_{w})+1},\ldots,b_{\ell,i,{\textstyle\sum\nolimits}_{w=1}^{j}\log _{2}(p/\pi_{w})}\right]^T,
\end{equation*}
and vice versa, for $j=1, \ldots. \ell$. The value of  $b(\ell,p)$ is given in (\ref{b_pl}), and $\pi_{j}$ is given in Definition 4, for $j=1,\ldots, \ell$.
\end{definition}

\begin{definition}[BCH Code \cite{constantBCH2020}]
The BICM approach involves an error-correcting code. To avoid the timing side-channel attacks, in this work, we choose BCH code, which has a constant-time decoder \cite{constantBCH2020}. Let BCH $(N,K)$ be a BCH code, where $N=\kappa b(\ell,p)$ is the number of encrypted bits in (\ref{num_ebits}), and $K=256$ represents the number of information bits. The number of correctable errors denoted as $t$, can be determined during the construction of the code. Given a message $m \in \mathcal{M}_{2,K}$, the BCH encoder and decoder are defined by
\begin{align}
[b_{\ell,1}, \cdots, b_{\ell,\kappa}]^T &=\mathsf{E}\text{-}\mathsf{Enc}(m)\nonumber \nonumber \\
m&=\mathsf{E}\text{-}\mathsf{Dec}([b_{\ell,1}, \cdots, b_{\ell,\kappa}]^T), \label{ECC_ENC}
\end{align}
where $b_{\ell,i}$ is defined in (\ref{b_l_i}), for $i=1,\ldots, \kappa$.
\end{definition}

\begin{definition}[Bit Interleaver and Deinterleaver]
The BICM approach involves bit interleaver and deinterleaver, where the former randomly changes the order of a binary sequence, while the latter performs the  inverse operation. The bit interleaver and deinterleaver are defined by
\begin{align}
[\hat{b}_{\ell,1}, \cdots, \hat{b}_{\ell,\kappa}]^T&=\Omega([b_{\ell,1}, \cdots, b_{\ell,\kappa}]^T)\nonumber \\
[{b}_{\ell,1}, \cdots, {b}_{\ell,\kappa}]^T&= \Omega^{-1}([\hat{b}_{\ell,1}, \cdots, \hat{b}_{\ell,\kappa}]^T), \label{interleaver}
\end{align}
where $b_{\ell,i}$ is defined in (\ref{b_l_i}), and $\hat{b}_{\ell,i}$ represents the permuted bits, for $i=1,\ldots, \kappa$.

\end{definition}

\begin{definition}[BICM]
The BICM models are represented by the block diagram of Figure \ref{K_BICM}, which includes 1) an error-correcting code $\mathsf{E}$-$\mathsf{Enc}$ in (\ref{ECC_ENC}); 2) an interleaver $\Omega$ in (\ref{interleaver}); 3) a lattice encoder $\Lambda$-$\mathsf{Enc}$ in Algorithm 4,
modelled by a labelling map $ b_{\ell,i}$ in (\ref{b_l_i}) and a lattice point $x_{\ell,i} \in \mathcal{C}(\mathcal{L}(\mathbf{B}), \mathbb{Z}_p^\ell)$, $i=1,\ldots, \kappa$; 4) a lattice decoder $\Lambda$-$\mathsf{Dec}$ in Algorithm 5; 5) a deinterleaver $\Omega^{-1}$ in (\ref{interleaver}); 6) an error-correcting decoder $\mathsf{E}$-$\mathsf{Dec}$  in (\ref{ECC_ENC}).
\end{definition}

\begin{figure}[tbp]
\centering
\includegraphics[width=0.8\textwidth]{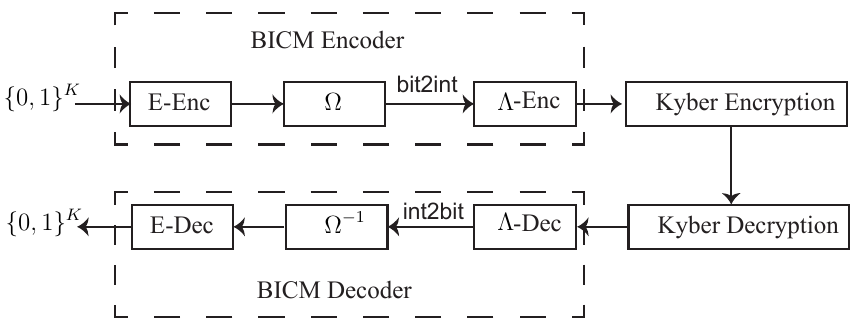} \vspace{0mm} 
\caption{Block diagram of the proposed bit-interleaved coded modulation.}
\label{K_BICM}
\end{figure}

We describe the BICM encoder and decoder in Algorithms 6 and 7, where the encoder $\mathsf{BICM}$-$\mathsf{Enc}$ takes as the input a message $m\in\mathcal{M}_{2,K}$ and outputs $v$ (i.e., the second part of the ciphertext), and where the decoder $\mathsf{BICM}$-$\mathsf{Dec}$ takes as input a ciphertext $(\bu,v)$ and outputs a message.

\vspace{-3mm}
\begin{algorithm}[H]
\caption{$\mathsf{BICM}$-$\mathsf{Enc}(m\in\mathcal{M}_{2,K}$): BICM encoder \\//Replace Step 5 in \textbf{Algorithm 2}}
\label{BICM_enc}
\begin{algorithmic}[1]

    \State
    $[b_{\ell,1}, \cdots, b_{\ell,\kappa}]^T:= \mathsf{E}$-$\mathsf{Enc}(m)$


    \State
    $[\hat{b}_{\ell,1}, \cdots, \hat{b}_{\ell,\kappa}]^T:=\Omega([b_{\ell,1}, \cdots, b_{\ell,\kappa}]^T)$

    \State
    $m_{\ell,i}:=\mathsf{bit2int}(\hat{b}_{\ell,i})$, $i=1,\ldots, \kappa$

    \State
    $x_{\ell,i}:=\hat{\mathbf{B}}\cdot m_{\ell,i} \bmod p$, $i=1,\ldots, \kappa$

    \State
    $x:=[x_{\ell,1},\ldots, x_{\ell,\kappa}]^T$

    \State
   \Return  $v:= \mathsf{Compress}_{q}( \mathbf{t}^{T}\mathbf{r}+e_{2}+\lfloor q/p\rceil \cdot x,d_{v}) $

\end{algorithmic}
\end{algorithm}

\vspace{-3mm}

\begin{algorithm}[H]
\caption{$\mathsf{BICM}$-$\mathsf{Dec}(\bu,v)$: BICM decoder \\//Replace Step 3 in \textbf{Algorithm 3}}
\label{BICM_dec}
\begin{algorithmic}[1]

    \State
    $y:=v -\mathbf{s}^{T}\cdot\mathbf{u} = [y_{\ell,1},\ldots, y_{\ell,\kappa}]^T$, where $y_{\ell,i}$ is the $i^{\text{th}}$ block of coefficients in $y$

    \State
    ${m}_{\ell,i} :=\mathsf{HS-CVP}(y_{\ell,i}/\lfloor q/p\rceil, \mathcal{L}(\mathbf{B})),  i=1, \ldots, \kappa$

    \State
    $\hat{b}_{\ell,i}:=\mathsf{int2bit}(m_{\ell,i}), i=1,\ldots, \kappa$


     \State
    $[{b}_{\ell,1}, \cdots, {b}_{\ell,\kappa}]^T:= \Omega^{-1}([\hat{b}_{\ell,1}, \cdots, \hat{b}_{\ell,\kappa}]^T)$

    \State
    \Return $m:=\mathsf{E}$-$\mathsf{Dec}([b_{\ell,1}, \cdots, b_{\ell,\kappa}]^T$)

\end{algorithmic}
\end{algorithm}

\vspace{-3mm}

\subsection{CER and DFR Reduction}
We now explain how to calculate the DFR for the proposed BICM approach. Let $\text{BER}_i \triangleq \Pr(\hat{b}_i \neq b_i)$ be the \emph{raw} bit error rate for the $i^{\text{th}}$ bit, for $i=1,\ldots, N$. Since the interleaver $\Omega$ spreads bursts of errors overall $N$ bits in $\kappa$ blocks, we can assume
\begin{equation}
\text{BER}_1 = \text{BER}_2 = \cdots =\text{BER}_N = \text{BER},
\end{equation}
and the coded bits go through essentially independent channels. It means that $\delta=1-(1-\text{BER})^N \approx N\cdot\text{BER}$. Therefore, the DFR with BICM, denoted as $\delta_c$, can be computed by
\begin{equation}
\delta_{c} =\sum_{j=t+1}^N
\begin{pmatrix}
N \\j
\end{pmatrix}
\text{BER}^j(1-\text{BER})^{N-j} 
<
\sum_{j=t+1}^N\begin{pmatrix}
N \\j
\end{pmatrix}
\text{BER}^j
\approx  \begin{pmatrix}
N \\t+1
\end{pmatrix}
\left(\dfrac{\delta}{N}\right)^{t+1}, \label{DFR_BICM}
\end{equation}
where $t$ is given in Definition 7, $N$ is given in (\ref{num_ebits}), and $\delta$ is given in (\ref{DFR_L}).

Considering the value of $\delta$ in Table \ref{KYBER_LC}, we see that the proposed BICM scheme can significantly reduce the DFR of Kyber for large $t$. This advantage allows us to reduce the ciphertext size by further compressing the first part of the ciphertext, i.e., $\bu$, since we can afford larger decoding noise. Let $\hat{d}_u$ be the compression parameter for $\bu$ used in the BICM approach. The CER reduction ratio for Kyber in (\ref{CER_R}) can then be calculated as:
\begin{equation}
\text{CER-R}=1-\dfrac{kn\hat{d}_u+nd_v}{knd_u+nd_v}\leq 1-\dfrac{\hat{d}_u}{d_u},
\end{equation}
where the values of $(n,k, d_u, d_v)$ are given in Table \ref{Kyber_Par}. Note that the value of $\delta_{c}$ in (\ref{DFR_BICM}) will be calculated based on $\hat{d}_u$.
\begin{example} For BW16 lattice with $p=4$, we construct a primitive narrow-sense BCH$(511,448)$ code and then shorten it to $(320,257)$ by removing the $191$ bits of padded $0$s \cite{ecclin2004}. This BCH code generates the codeword out of $256$ secret key bits, $63$ redundancy bits and $1$ padding bit. It can correct $t=7$ errors. Table \ref{BICM_Kyber} compares $(\delta_c,\text{CER-R})$ for difference $\hat{d}_u$. We observe that the BICM approach can significantly reduce DFR and CER of Kyber.
\end{example}

\begin{table}[th]
\centering
\caption{BCH$(320,257)$-BW16 for KYBER}
\label{BICM_Kyber}\centering
\begin{tabular}{|c|c|c|}
\hline
& $\hat{d}_{u}=9$ & $\hat{d}_{u}=8$ \\ \hline
$\var(\psi _{d_{u}})$ & $3.8$ & $14.1$ \\ \hline
$(\delta _{c},$CER-R$)$: KYBER512 & $(2^{-623},8.33\%)$ & $(2^{-194},16.67\%)$
\\ \hline
$(\delta _{c},$CER-R$)$: KYBER768 & $(2^{-683},8.82\%)$ & $(2^{-202},17.65\%)$
\\ \hline
$(\delta _{c},$CER-R$)$: KYBER1024 & $(2^{-791},16.33\%)$ & $%
(2^{-213},24.49\%)$ \\ \hline
\end{tabular}
\end{table}

\subsection{Security}
As shown in Table \ref{BICM_Kyber}, the value of $\var(\psi _{d_{u}})$ increases as the value of $\hat{d}_{u}$ decreases. According to \cite{Kyber2021}, adding more rounding noise will make the M-LWE problem harder. Moreover, the proposed BICM scheme is resistant to timing attacks, as it uses constant-time BCH and lattice decoders.

\section{Conclusion}
Our analysis has shown that powerful error-correcting codes within the lattice encoder can lead to a significant improvement of important performance parameters of Kyber, such as security level, decryption failure rate, and communication cost. The lattice encoder can be viewed as a form of coded modulation, which ensures denser packing and lower raw bit error rate, while classical codes, e.g., BCH, can be used to get a high error-correcting capability. The proposed BICM approach, i.e., BCH codes through lattice modulation, combines their advantages to achieve a quasi-error-free key exchange with a high error-correcting capability. Through the use of BICM, we
have obtained enhanced parameter sets for Kyber, offering
higher security levels, smaller decryption failure rates, and smaller communication costs at the same time.

\section*{Data Availability Statement}
We do not analyse or generate any datasets, because our work proceeds within a theoretical and mathematical approach. One can obtain the relevant materials from the references below.

\bibliography{IEEEabrv,LIUBIB}

\end{document}